\newtheorem{lemma}{Lemma}
\newtheorem{theorem}{Theorem}
\newtheorem{corollary}{Corollary}
\title{\LARGE \bf Quantum Surrogate-Driven Image Classifier: A Gradient-Free Approach to Avoid Barren Plateaus}
\begin{document}

\author{Yichen Xie \\ La Salle College, HKSAR \\ s21325@lsc.hk}

\maketitle

\begin{abstract}
Training deep quantum neural networks (QNNs) for image classification is notoriously difficult due to vanishing gradients (barren plateaus) and limited nonlinearity in purely unitary circuits. We propose a novel gradient-free surrogate-driven framework combined with mid-circuit measurement and reset of ancillary qubits to induce effective nonunitarity. Our approach uses a classical neural surrogate to predict measurement outcomes from circuit parameters to avoid direct gradients. Theoretical results prove that bypassing quantum gradients mitigates plateau issues. Experiments on MNIST, CIFAR-10, and CIFAR-100 with 15-qubit, 6-layer circuits using four resettable ancillas demonstrate superior accuracy compared to direct-gradient QNNs and classical baselines. Our method also serves as a potential for a generalized training framework applicable to various QNN architectures beyond image classification.
\end{abstract}

\section{Introduction}

Parametric quantum circuits (PQCs) leverage the exponential dimensionality of an \(n\)-qubit Hilbert space, offering powerful potential for tasks such as image classification. However, their practical training faces critical challenges: the barren plateau phenomenon, causing gradients to vanish exponentially with increasing circuit depth or qubit count, severely impeding gradient-based optimization \cite{mcclean2018barren,Cerezo2021}; and the inherent linearity of unitary operations within PQCs, restricting their expressibility compared to classical deep networks that utilize nonlinear activations \cite{Mitarai2018,Preskill2018}.

To address these challenges, we propose a novel surrogate-based optimization approach. Instead of directly computing quantum gradients, we train a classical neural network surrogate model to learn the mapping from circuit parameters to measurement outcomes, thus providing efficient surrogate gradients without quantum differentiation \cite{Benedetti2019,Schuld2015}. Furthermore, we introduce mid-circuit measurement and reset operations on designated ancilla qubits after each circuit layer, creating branching effects analogous to nonlinear activations in classical neural networks, and facilitating resource-efficient qubit reuse \cite{Chertkov2022,IBM2021}.

Additionally, recognizing that naive amplitude encoding is inefficient for large images, we first compress input images into a reduced feature space (e.g., 256 for MNIST, 512 for CIFAR), then use a parameter-projection network to map these compressed features into an extensive set of circuit parameters \cite{Havlicek2019,Lloyd2020,Schuld2021}. Our integrated approach, combining surrogate-driven optimization, nonlinear-like mid-circuit operations, and efficient data encoding, demonstrates superior performance compared to direct-gradient quantum neural networks and classical baselines, as validated by extensive simulations on benchmark datasets including MNIST, CIFAR-10, and CIFAR-100 \cite{Schuld2015,Preskill2018,Schuld2021}.

\section{Architecture and Theoretical Framework}

\subsection{Overview}

\begin{figure}[h] 
\centering
\includegraphics[width=\linewidth]{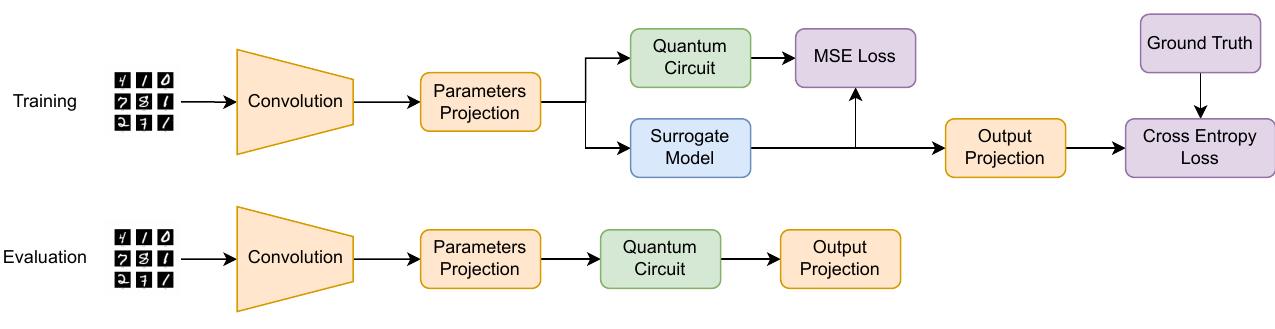}
\caption{Framework of the architecture during training and evaluation.}
\label{fig:surrogate}
\end{figure}

We now describe the overall architecture. The quantum circuit has $n=15$ qubits, including $n_a=4$ ancillas at wires 3,6,9,12, across $L=6$ layers. Each layer applies parameterized unitaries, measures ancillas, resets them to $|0\rangle$ if measured as $|1\rangle$, and proceeds to the next layer. A simplified 6-qubit, 3-layer version is shown in Figure~\ref{fig:surrogate_circuit}.

\begin{figure}[h]
\centering
\includegraphics[width=0.75\linewidth]{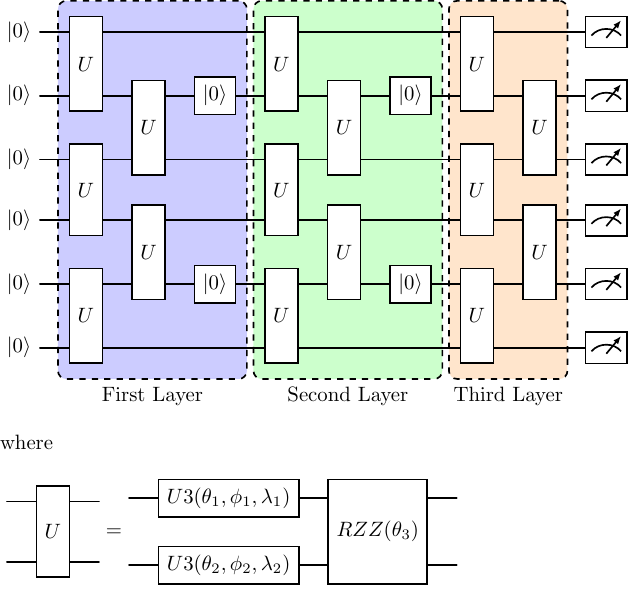}
\caption{Simplified quantum circuit with 6 qubits, 3 layers, and ancillas at wires 1,5. Each U module has 7 parameters.}
\label{fig:surrogate_circuit}
\end{figure}

For MNIST ($28\times28=784$ pixels), images are compressed to 256 features; for CIFAR-10/100 ($32\times32\times 3=3072$), they are compressed to 512 features, using a simple VGG-style CNN $C_w(x)$ consisting of four blocks with Batch Normalization, ReLU activations \cite{householder1941, nair2010rectified}, and max pooling. The compressed vector $\mathbf{z}\in\mathbb{R}^{256}$ or $\mathbb{R}^{512}$ is then fed into a two-layered MLP $\Gamma_w(\mathbf{z})\in\mathbb{R}^p$, generating $p$ trainable angles $\boldsymbol{\theta}$ (with $p=588$ for 6 layers and 15 qubits). By measuring ancillas after each layer and resetting them, the main qubits undergo an effective nonlinear transformation (Theorem~\ref{thm:nonlinearity}).

The circuit output (expectation value with Pauli-Z observable) is passed into a projection layer $O_w(x)\in[0,1]$ for classification. We compute the cross-entropy (CE) loss:
\begin{equation}
\mathcal{L}_1(x,y)=-\sum_{j=1}^{C}y_j\log\bigl(O_w(S_w(\Gamma_w(C_w(x_j))))\bigr)
\end{equation}
where $S_w$ is the surrogate model, $x$ is the input, $y_j$ is the one-hot label, and $C$ is the number of classes. Gradients $\nabla O_w$ are obtained by backpropagating $\nabla\mathcal{L}_1$.

To avoid gradients through the quantum circuit during training, we sample parameter points around $\boldsymbol{\theta}_t$, run the circuit, and fit a surrogate neural network $S_w(\boldsymbol{\theta})\approx\mathbf{m}(\boldsymbol{\theta})$ using mean-squared error (MSE):
\begin{equation}
\mathcal{L}_2(\boldsymbol{\theta}) 
= 
\frac{1}{2p}\sum^{p}_{i=1}\bigl(S_w(\boldsymbol{\theta_i})-\mathbf{m}(\boldsymbol{\theta_i})\bigr)^2.
\end{equation}
This surrogate provides differentiable approximations of quantum outcomes, allowing efficient classical updates to CNN and MLP parameters.

Traditional QNNs apply purely unitary layers, measuring only at the end, resulting in linear transformations and susceptibility to barren plateaus \cite{mcclean2018barren}. Our method introduces mid-layer ancilla measurements, inducing nonunitary transformations, and circumvents barren plateaus by using gradient-free surrogate modeling.

\subsection{Nonunitary Gates by Measuring and Resetting Ancillas}

We first prove that measuring and resetting a subset of qubits can implement a nontrivial class of nonlinear maps on the remaining subsystem without collapsing the entire system to a single state.

\begin{lemma}[Non‑unitary but Non‑collapsing Layer]
\label{lemma:nonunitary}
Let $n=n_m+n_a$ qubits be divided into $n_m$ \emph{main} qubits and
$n_a$ \emph{ancillas} ($n_a<n$).
Start from an $n$–qubit density operator $\rho$.
Apply a global unitary $U(\boldsymbol\theta)$, then measure every ancilla
in the $\{\lvert0\rangle,\lvert1\rangle\}$ basis.
For each outcome $x\in\{0,1\}^{n_a}$ \emph{reset} the ancillas to
$\lvert0^{\otimes n_a}\rangle$.
Denote the overall completely‑positive trace‑preserving (CPTP) map by
$\mathcal M$.
Tracing out the ancillas yields the channel
\begin{equation}
   \operatorname{Tr}_{\!\mathrm{anc}}\!\bigl[\mathcal M(\rho)\bigr]
   \;=\;
   \sum_{x\in\{0,1\}^{n_a}}\! K_x\,\rho\,K_x^{\dagger},
\end{equation}
where each Kraus operator on the main subsystem is
\begin{equation}\label{eq:kraus-main}
   K_x
   \;=\;
   {}_{\mathrm{anc}}\!\!\bra{0^{\otimes n_a}}
          \bigl(\,\lvert0^{\otimes n_a}\rangle\!\langle x|\;\otimes I_{n_m}\bigr)
          U
   \;=\;
   {}_{\mathrm{anc}}\!\!\bra{x}\,U.
\end{equation}
Because the $K_x$ are generically non‑unitary and (for entangling $U$)
mutually distinct, $\mathcal M$ performs a nonlinear, non‑collapsing
transformation on the main qubits that cannot be written as a single
unitary acting only on those qubits.
\end{lemma}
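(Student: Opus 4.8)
The plan is to assemble the CPTP map $\mathcal M$ from its two elementary pieces—the ancilla measurement and the conditional reset—write each as a family of Kraus operators on the full $n$‑qubit space, compose them with $U$, and only then trace out the ancillas. First I would model the projective measurement of the ancillas by the orthogonal family $P_x=(\lvert x\rangle\langle x\rvert)_{\mathrm{anc}}\otimes I_{n_m}$ and the conditional reset $\lvert x\rangle\mapsto\lvert0^{\otimes n_a}\rangle$ by the operator $R_x=(\lvert0^{\otimes n_a}\rangle\langle x\rvert)_{\mathrm{anc}}\otimes I_{n_m}$, so that the combined measure‑and‑reset channel carries the Kraus set $\{R_x\}$. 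A one‑line check gives $\sum_x R_x^{\dagger}R_x=\sum_x(\lvert x\rangle\langle x\rvert)_{\mathrm{anc}}\otimes I_{n_m}=I_n$, confirming trace preservation.

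Next I would compose with the global unitary to get $\mathcal M(\rho)=\sum_x R_x\,U\rho U^{\dagger}R_x^{\dagger}$. Because every surviving branch leaves the ancillas in $\lvert0^{\otimes n_a}\rangle$, this factorizes as $\lvert0^{\otimes n_a}\rangle\langle0^{\otimes n_a}\rvert_{\mathrm{anc}}\otimes\bigl(\sum_x K_x\rho K_x^{\dagger}\bigr)$ with $K_x={}_{\mathrm{anc}}\langle x\rvert U$, which is exactly \eqref{eq:kraus-main}; tracing out the now trivially pure ancilla factor multiplies by $\operatorname{Tr}\lvert0^{\otimes n_a}\rangle\langle0^{\otimes n_a}\rvert=1$ and leaves precisely $\sum_x K_x\rho K_x^{\dagger}$. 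The completeness relation transfers directly, $\sum_x K_x^{\dagger}K_x=U^{\dagger}(I_{\mathrm{anc}}\otimes I_{n_m})U=I_n$, so the reduced map on the main qubits is a bona fide channel.

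For the qualitative claims I would first observe that each $K_x$ is a contraction, $\lVert K_x\rVert\le1$, with equality only when $U$ drives the entire input into the single ancilla sector $\lvert x\rangle$; for an entangling $U$ at least two of the $K_x$—equivalently, with the input ancillas fixed to $\lvert0\rangle$, the \emph{square} operators $\tilde K_x={}_{\mathrm{anc}}\langle x\rvert U\lvert0\rangle_{\mathrm{anc}}$ on $\mathcal H_{n_m}$—are nonzero and linearly independent, hence each is a strict contraction and therefore non‑unitary, and they are mutually distinct as different ancilla rows of $U$. Consequently the channel has Kraus rank $\ge2$; since a map is a unitary conjugation precisely when it admits a single unitary Kraus operator, this rules out expressing it as one unitary acting on the main qubits alone.

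The step I expect to be the main obstacle is justifying the word \emph{nonlinear}, since a CPTP map is by construction linear in $\rho$. The resolution I would adopt is to read the nonlinearity at the level of the individual sub‑normalized branches: on a pure main state $\lvert\psi\rangle$ each branch produces $\tilde K_x\lvert\psi\rangle$, and the physically normalized branch $\tilde K_x\lvert\psi\rangle/\lVert\tilde K_x\lvert\psi\rangle\rVert$ depends nonlinearly on $\lvert\psi\rangle$ exactly because $\tilde K_x$ is a strict contraction—this is the feature‑space nonlinearity the surrogate exploits downstream. I would also make explicit that \emph{non‑collapsing} refers to summing over (rather than post‑selecting) the outcomes, so the main register is not projected onto a single state but evolves under a rank‑$\ge2$ channel that retains coherence, in contrast to a terminal measurement.
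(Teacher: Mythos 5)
Your proposal is correct and takes essentially the same route as the paper: the same reset operators $R_x$ (with the measurement projector absorbed, since $R_xM_x=R_x$), composition with the global unitary, and identification of the Kraus operators $K_x={}_{\mathrm{anc}}\!\bra{x}U$, with your derivation of the factorized post-reset state $\mathcal M(\rho)=\ket{0^{\otimes n_a}}\!\bra{0^{\otimes n_a}}_{\mathrm{anc}}\otimes\sum_x K_x\rho K_x^{\dagger}$ simply proving the paper's Corollary~\ref{cor:ancilla-reuse} en route, which makes the partial trace trivial where the paper instead inserts identity resolutions. Incidentally, your version is more careful than the paper's in two places: the completeness relation is indeed $\sum_x K_x^{\dagger}K_x=I_n$ on the full space (the paper's $I_{n_m}$ is the correct statement only for your square operators $\tilde K_x={}_{\mathrm{anc}}\!\bra{x}U\ket{0^{\otimes n_a}}_{\mathrm{anc}}$), and your normalized-branch reading of ``nonlinear'' honestly resolves the tension with the linearity of CPTP maps that the lemma's wording glosses over --- with the one caveat that two nonzero, linearly independent $\tilde K_x$ need not each have operator norm strictly below $1$; their non-unitarity instead follows from the completeness relation, since an isometric $\tilde K_{x_0}$ would force every other $\tilde K_x$ to vanish via $\sum_x\tilde K_x^{\dagger}\tilde K_x=I_{n_m}$.
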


\begin{proof}
Let
$
   M_x
   =\bigl(\lvert x\rangle\!\langle x|\bigr)_{\!\mathrm{anc}}\!
    \otimes I_{n_m}
$
project onto the ancilla outcome $x$.
Define the \emph{reset operator}
\begin{equation}
   R_x
   \;=\;
   \bigl(\,\lvert0^{\otimes n_a}\rangle\!\langle x|\bigr)_{\!\mathrm{anc}}
   \otimes I_{n_m},
   \qquad
   x\in\{0,1\}^{n_a}.
\end{equation}
For each branch $x$ we evolve as
$
   \rho \;\mapsto\; R_x\,M_x\,U\,\rho\,U^{\dagger}M_x\,R_x^{\dagger}.
$
Summing over outcomes gives the global CPTP map
\begin{equation}
   \mathcal M(\rho)
   \;=\;
   \sum_{x} R_x M_x U\,\rho\,U^{\dagger}M_x R_x^{\dagger}.
\end{equation}

Tracing out the ancillas and inserting the resolutions of the identity,
\begin{equation}
\begin{aligned}
   &\operatorname{Tr}_{\!\mathrm{anc}}\!\bigl[\mathcal M(\rho)\bigr] \\
   &\,=\;
     \sum_{x}
        \operatorname{Tr}_{\!\mathrm{anc}}
        \Bigl[
          \bigl(R_xM_xU\bigr)\rho
          \bigl(U^{\dagger}M_xR_x^{\dagger}\bigr)
        \Bigr]                                       \\[2pt]
   &\,=\;
     \sum_{x}
       \bigl({}_{\mathrm{anc}}\!\bra{0^{\otimes n_a}}
             R_xM_xU\bigr)\,
       \rho\,
       \bigl(U^{\dagger}M_xR_x^{\dagger}
             \ket{0^{\otimes n_a}}_{\mathrm{anc}}\bigr) \\[2pt]
   &\,=\;
     \sum_{x} K_x\,\rho\,K_x^{\dagger},
\end{aligned}
\end{equation}
with $K_x$ defined in~\eqref{eq:kraus-main}.  The second equality
uses the fact that
$
   (\lvert y\rangle\!\langle y|)_{\!\mathrm{anc}}
$
appearing in the partial trace selects $y=0^{\otimes n_a}$
because $R_x$ resets ancillas to that state.

Unitarity of $U$ and orthogonality of the projectors imply
$\sum_x K_x^{\dagger}K_x=I_{n_m}$, so the set
$\{K_x\}_{x\in\{0,1\}^{n_a}}$
constitutes a valid Kraus representation of a CPTP map on the main
qubits.

If $U$ entangles ancillas with main qubits, the operators $K_x$ differ
for different $x$ and are not proportional to unitaries on the main
subsystem.  Multiple Kraus terms therefore survive in the sum,
demonstrating that $\mathcal M$ is generally a non‑unitary,
non‑projective transformation that nonetheless leaves the main qubits
in a mixed state retaining information from all measurement
branches.
\end{proof}

\begin{corollary}[Ancilla reuse after measure--reset]
\label{cor:ancilla-reuse}
Using notation from Lemma~\ref{lemma:nonunitary}, the full post-measurement-reset state factorizes as
\begin{equation}
\mathcal M(\rho) = |0^{\otimes n_a}\rangle\!\langle 0^{\otimes n_a}|_{\mathrm{anc}} \otimes \sum_{x\in\{0,1\}^{n_a}} K_x\rho K_x^{\dagger}.
\end{equation}
Thus, the ancillas are deterministically reset to $|0^{\otimes n_a}\rangle$, remain unentangled from the main register, and can be safely reused.
\end{corollary}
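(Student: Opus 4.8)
The plan is to prove the corollary by tracking the full post-measurement-reset state, including the ancilla register, rather than only its reduced version on the main qubits. The key observation is that the reset operator $R_x$ appearing in Lemma~\ref{lemma:nonunitary} deterministically maps every ancilla outcome $x$ to the fixed state $\lvert0^{\otimes n_a}\rangle$, so that after resetting, the ancilla factor of each branch is identically $\lvert0^{\otimes n_a}\rangle\!\langle0^{\otimes n_a}\rvert$, independent of $x$. This common factor can then be pulled outside the sum over outcomes, yielding a tensor-product structure between a pure, fixed ancilla state and a mixed main-register state.

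First I would recall the branch-wise evolution $\rho \mapsto R_x M_x U\,\rho\,U^{\dagger} M_x R_x^{\dagger}$ and the global map $\mathcal M(\rho)=\sum_x R_x M_x U\,\rho\,U^{\dagger}M_x R_x^{\dagger}$ from the lemma. Next I would factor each $R_x$ as $R_x = \bigl(\lvert0^{\otimes n_a}\rangle_{\mathrm{anc}} \otimes I_{n_m}\bigr)\,\bigl({}_{\mathrm{anc}}\!\langle x\rvert \otimes I_{n_m}\bigr)$, so that the left ancilla factor $\lvert0^{\otimes n_a}\rangle_{\mathrm{anc}}$ is pulled to the outside of each branch. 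Writing $\widetilde K_x := \bigl({}_{\mathrm{anc}}\!\langle x\rvert \otimes I_{n_m}\bigr) M_x U = {}_{\mathrm{anc}}\!\langle x\rvert\,U = K_x$ (using $M_x$ acting as identity once projected onto $\langle x\rvert$, matching the collapse in~\eqref{eq:kraus-main}), each branch becomes $\lvert0^{\otimes n_a}\rangle\!\langle0^{\otimes n_a}\rvert_{\mathrm{anc}} \otimes K_x\,\rho\,K_x^{\dagger}$. Summing over $x$ and noting the ancilla factor is independent of $x$ gives the claimed tensor product.

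The remaining assertions follow at once: since the total state is a genuine tensor product of $\lvert0^{\otimes n_a}\rangle\!\langle0^{\otimes n_a}\rvert_{\mathrm{anc}}$ with an operator supported entirely on the main register, the ancillas carry zero entanglement with the main qubits (the reduced ancilla state is the pure state $\lvert0^{\otimes n_a}\rangle$), and partial tracing over the ancillas simply returns $\sum_x K_x\,\rho\,K_x^{\dagger}$, consistent with Lemma~\ref{lemma:nonunitary}. Because the ancilla register is left in a fixed, disentangled pure state, it may be reinitialized and reused in the next layer without disturbing the main register.

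I expect the only subtle step to be the algebraic bookkeeping that collapses $M_x$ against the reset bra: one must verify carefully that $\bigl({}_{\mathrm{anc}}\!\langle x\rvert \otimes I\bigr) M_x = {}_{\mathrm{anc}}\!\langle x\rvert \otimes I$, i.e.\ that projecting onto outcome $x$ and then contracting with $\langle x\rvert$ is the same as contracting directly, which holds because $M_x$ is the orthogonal projector onto the ancilla subspace labeled by $x$. Aside from this, the proof is a routine rearrangement and requires no new estimates.
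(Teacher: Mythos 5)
Your proof is correct and follows essentially the same route as the paper's: both hinge on the identity $R_xM_x=\bigl(\lvert0^{\otimes n_a}\rangle\!\langle x\rvert\bigr)_{\mathrm{anc}}\otimes I_{n_m}$, which makes each branch factor as $\lvert0^{\otimes n_a}\rangle\!\langle0^{\otimes n_a}\rvert_{\mathrm{anc}}\otimes K_x\rho K_x^{\dagger}$ with an $x$-independent ancilla state that pulls out of the sum. Your direct branch-wise factorization is a slightly more explicit presentation of the paper's one-line substitution-plus-identity-resolution argument, and your check that $\bigl({}_{\mathrm{anc}}\!\langle x\rvert\otimes I\bigr)M_x={}_{\mathrm{anc}}\!\langle x\rvert\otimes I$ is the correct and only nontrivial step.
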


\begin{proof}
Start from $\mathcal M(\rho)=\sum_{x}(R_xM_xU)\rho(U^{\dagger}M_xR_x^{\dagger})$, substitute $R_xM_x=(|0^{\otimes n_a}\rangle\!\langle x|)\otimes I_{n_m}$, and insert identity resolution $I_{\mathrm{anc}}=\sum_{y}|y\rangle\!\langle y|$. Only $y=0^{\otimes n_a}$ remains, yielding the factorization with $K_x={}_{\mathrm{anc}}\langle x|U$.
\end{proof}

\begin{theorem}[Nonlinearity and Preserved Expressibility]
\label{thm:nonlinearity}
The measure-and-reset operation from Lemma~\ref{lemma:nonunitary} is inserted after each layer in a deep circuit of total depth $L$. Let $\Phi(\boldsymbol{\theta})$ be the resulting overall map from initial $n_m$-qubit states to final $n_m$-qubit states. Then for sufficiently large $n_a\ge 1$ and entangling unitaries $U_\ell(\boldsymbol{\theta}_\ell)$, the map $\Phi$ can realize a broad class of nonlinear transformations on the main qubits, while avoiding deterministic collapse onto a single pure state. Formally, there exist Kraus representations:
\begin{equation}
\mathrm{Tr}_{\mathrm{anc}} \Bigl[\,\prod_{\ell=1}^{L}\Bigl(R \circ \mathcal{M}_\ell\circ U_\ell(\boldsymbol{\theta}_\ell)\Bigr)\Bigr]
=
\sum_{k} V_k\,(\cdot)\,V_k^\dagger
\end{equation}
with $V_k$ nontrivial, and no single Kraus operator dominates all inputs. Hence, this multi-layer measure-and-reset design acts as an effectively nonlinear feedforward QNN, which can approximate a wide range of channel mappings without collapsing distinct inputs into a single final state.
\end{theorem}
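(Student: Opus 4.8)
The plan is to build $\Phi$ layer by layer directly from Lemma~\ref{lemma:nonunitary} and Corollary~\ref{cor:ancilla-reuse}. By Corollary~\ref{cor:ancilla-reuse}, after each measure--reset the ancillas return deterministically to $|0^{\otimes n_a}\rangle$ and are disentangled from the main register, so the layers compose cleanly: the main-qubit state fed into layer $\ell+1$ is exactly the output of layer $\ell$ tensored with a fresh $|0^{\otimes n_a}\rangle$. Each layer thus acts on the main qubits as a CPTP map with Kraus operators $K^{(\ell)}_{x}={}_{\mathrm{anc}}\langle x| U_\ell(\boldsymbol\theta_\ell)|0^{\otimes n_a}\rangle$ for $x\in\{0,1\}^{n_a}$. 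Applying the standard composition rule (the Kraus set of $\mathcal B\circ\mathcal A$ is $\{B_jA_i\}$) across all $L$ layers, the composite map has Kraus operators indexed by the full measurement record $\mathbf x=(x_1,\dots,x_L)$, namely $V_{\mathbf x}=K^{(L)}_{x_L}\cdots K^{(1)}_{x_1}$, which is exactly the claimed representation $\mathrm{Tr}_{\mathrm{anc}}[\prod_\ell(R\circ\mathcal M_\ell\circ U_\ell)]=\sum_{\mathbf x}V_{\mathbf x}(\cdot)V_{\mathbf x}^\dagger$. Trace preservation $\sum_{\mathbf x}V_{\mathbf x}^\dagger V_{\mathbf x}=I$ then follows by induction on $L$ from the single-layer completeness relation $\sum_x (K^{(\ell)}_x)^\dagger K^{(\ell)}_x=I$ established in Lemma~\ref{lemma:nonunitary}.

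Next I would rule out collapse onto a single pure state and domination by one branch. A channel sends every input to a fixed pure state $|s\rangle$ iff every Kraus operator factors as $|s\rangle\langle\phi_{\mathbf x}|$, i.e. rank one with a common output vector. I would exclude this by choosing the entanglers $U_\ell$ generically: for an entangling $U_\ell$ each $K^{(\ell)}_x$ can be taken invertible (the weight of $U_\ell|0\rangle_{\mathrm{anc}}|\psi\rangle$ spreads over several ancilla basis states), so the products $V_{\mathbf x}$ are full rank $2^{n_m}$, and several full-rank Kraus operators cannot share a one-dimensional output space; hence purity is strictly contracted and no pure input is mapped to a single pure output. To formalize ``no single Kraus operator dominates all inputs,'' I would combine $\sum_{\mathbf x}\mathrm{Tr}[V_{\mathbf x}\rho V_{\mathbf x}^\dagger]=1$ with anti-concentration of the branch (Born) probabilities for generic or Haar-like $U_\ell$: for a dense set of inputs every branch probability is bounded away from $1$, giving $\max_{\mathbf x}\mathrm{Tr}[V_{\mathbf x}\rho V_{\mathbf x}^\dagger]\le 1-\delta$ uniformly.

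For the expressibility (``nonlinearity'') claim I would use the Stinespring picture: measuring an ancilla that is subsequently traced out induces the same main-qubit channel as tracing it directly, so one layer already realizes an arbitrary dilation with a $2^{n_a}$-dimensional environment, and the reset lets the single physical register be reused $L$ times, emulating an effective environment of dimension up to $2^{n_a L}$. Since every CPTP map on $n_m$ qubits has Kraus rank at most $4^{n_m}$, a dimension count suggests universality once $n_a L\ge 2n_m$. \textbf{The hard part} is that the $V_{\mathbf x}$ are \emph{not} an unconstrained Kraus set --- they share prefixes (all factor through $K^{(1)}_{x_1}$), inheriting a tree structure that restricts the reachable channels. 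The crux is therefore to show this constraint becomes non-binding up to arbitrary accuracy for full entanglers and large $L$, which I would argue by a parameter-counting/genericity statement (the map from $(\boldsymbol\theta_1,\dots,\boldsymbol\theta_L)$ to the Choi matrix is a submersion on a dense open set once $Lp$ exceeds the target channel's intrinsic dimension) rather than by explicit synthesis. The ``effective nonlinearity'' is then an \emph{interpretation} of this structure: because $\Phi$ has Kraus rank $>1$ and contracts purity, it is neither unitary nor an isometry on the main qubits, so the induced state-to-state and feature maps are non-invertible and purity-reducing --- precisely the information-contracting behavior supplied by a classical nonlinear activation and unattainable by any single unitary layer.
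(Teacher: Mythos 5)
Your proposal shares its backbone with the paper's proof---recursive application of Lemma~\ref{lemma:nonunitary} across layers, yielding composite Kraus operators indexed by the full measurement record---but the paper's version is a four-sentence sketch, and you make precise several things it only gestures at. Invoking Corollary~\ref{cor:ancilla-reuse} to justify clean layer composition and the composition rule to obtain $V_{\mathbf x}=K^{(L)}_{x_L}\cdots K^{(1)}_{x_1}$, with trace preservation by induction from the single-layer completeness relation, is exactly what the paper's phrase ``branching sequences of $M_x$ projectors'' means but never writes down. Where you genuinely diverge: the paper dismisses collapse in one sentence (``no single outcome branch has unit probability for all states unless gates are trivial''), whereas you give an actual argument via the characterization that a constant-pure-output channel forces every Kraus operator to be $|s\rangle\langle\phi_{\mathbf x}|$ with a common range, which generic full-rank $K^{(\ell)}_x$ rule out; and you alone identify the prefix-sharing tree structure of the $V_{\mathbf x}$---the reachable maps are compositions of $L$ channels each of Kraus rank at most $2^{n_a}$, not arbitrary channels of rank $2^{n_a L}$, a genuine constraint (indivisible channels exist) that the paper's assertion of a ``broad class'' silently skips. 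Two caveats on your side. First, your submersion/parameter-counting step is a plan, not a proof, so the expressibility claim remains unestablished in your write-up as well (though no worse than in the paper's; note also that the theorem's hypothesis ``sufficiently large $n_a$'' offers a simpler discharge: for $n_a\ge 2n_m$ a \emph{single} layer already furnishes a full Stinespring dilation, making universality immediate without resolving the tree constraint). Second, entanglement of $U_\ell$ alone does not make each $K^{(\ell)}_x={}_{\mathrm{anc}}\langle x|U_\ell|0^{\otimes n_a}\rangle_{\mathrm{anc}}$ invertible---there are entangling unitaries with singular blocks---so both that step and the claim that \emph{no} pure input maps to a pure output should be stated as holding for generic $\boldsymbol\theta_\ell$ (a measure-zero exclusion), which your own framing elsewhere (``generic or Haar-like'') already supports; the paper's proof has the same looseness, phrased as ``sufficiently entangling.''
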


\begin{proof} We recursively apply Lemma~\ref{lemma:nonunitary}. Each layer $\ell$ has a global unitary $U_\ell$ on $n$ qubits, followed by measuring ancillas plus reset. The partial trace over ancillas after $L$ layers yields $\sum_k V_k (\cdot) V_k^\dagger$ with $V_k$ formed by branching sequences of $M_x$ projectors. If the unitaries $U_\ell(\boldsymbol{\theta}_\ell)$ sufficiently entangle ancillas with main qubits, measurement outcomes condition distinct Kraus branches. Because resets restore the ancillas to $|0\cdots0\rangle$, each layer reintroduces nonlinearity. Non-collapse follows from the fact that no single outcome branch has unit probability for all states (unless gates are trivial). Thus multiple Kraus operators remain active, preserving variation among inputs. \end{proof}

\subsection{Surrogate Model for Gradient-Free Parameter Updates}

Next, we outline a gradient-free procedure to train these multi-layer measure-and-reset circuits. Denote the circuit parameters by $\boldsymbol{\theta}$, and let the final measured probabilities on $n_m$ qubits for a classical label $y$ be $p(y|\boldsymbol{\theta};x)$, where $x$ is the input data. In classification, we aim to minimize a cost function:
\begin{equation}
\begin{aligned}
\mathcal{L}(\boldsymbol{\theta})
=
\frac{1}{N}\sum_{(x_i,y_i)}\ell\bigl(p(\cdot|\boldsymbol{\theta};x_i),y_i\bigr),
\end{aligned}
\end{equation}
which is a cross-entropy in our case. A typical gradient-based approach would compute $\partial \mathcal{L}/\partial \theta_j$ via parameter-shift rules, but for large $n$ or deep $L$, these gradients can vanish due to barren plateaus\cite{mcclean2018barren}.

Instead, we train a classical neural surrogate $S(\boldsymbol{\theta})$ to predict the measurement outcomes of the quantum circuit (not the final loss). Specifically, let the circuit output a vector $\mathbf{m}(\boldsymbol{\theta},x)\in\mathbb{R}^d$ of measurement statistics (e.g.\ log-odds for each class), and let $S$ be a neural network that takes $(\boldsymbol{\theta}, x)$ as input and estimates $\mathbf{m}(\boldsymbol{\theta}, x)$. We train $S$ by sampling $\boldsymbol{\theta}$ in a neighborhood of the current parameter and measuring the quantum device on data $x$ to get true outcomes $\mathbf{m}$. Then $S$ is optimized (in a purely classical sense) to fit these measured samples:
\begin{equation}
\min_{w} \;\sum_{k} \bigl\|\mathbf{m}(\boldsymbol{\theta}^{(k)},x^{(k)}) - S_w(\boldsymbol{\theta}^{(k)},x^{(k)})\bigr\|^2.
\end{equation}
Once $S$ accurately approximates $\mathbf{m}$ near the current $\boldsymbol{\theta}$, we compute the classification loss purely classically:
\begin{equation}
\begin{aligned}
\widetilde{\mathcal{L}}(\boldsymbol{\theta})
&=
\frac{1}{N}\sum_{(x_i,y_i)}\ell\Bigl(S(\boldsymbol{\theta}, x_i),\,y_i\Bigr),
\end{aligned}
\end{equation}
and we take classical gradients of $\widetilde{\mathcal{L}}$ w.r.t.\ $\boldsymbol{\theta}$. This bypasses direct quantum gradients. The optimization update is:
\begin{equation}
\boldsymbol{\theta}_{t+1}
=
\boldsymbol{\theta}_t
-\eta\,\nabla_{\boldsymbol{\theta}}\widetilde{\mathcal{L}}(\boldsymbol{\theta})\bigl|_{\boldsymbol{\theta}=\boldsymbol{\theta}_t}.
\end{equation}
Because the neural surrogate can avoid the exponential gradient decay typical in quantum circuits, we do not encounter vanishing gradients from the circuit itself. We will see in experiments that this method yields stable updates even for deeper circuits.

Below is a lemma showing that if $S$ is sufficiently expressive and well-trained, then descending on $\widetilde{\mathcal{L}}$ descends on the true $\mathcal{L}$ as well:

\begin{lemma}[Surrogate Descent]\label{lemma:descent}
Assume the true loss $L(\boldsymbol{\theta})$ is continuously differentiable. Let $S(\boldsymbol{\theta})$ be a surrogate model such that $S(\boldsymbol{\theta}^{(t)}) = L(\boldsymbol{\theta}^{(t)})$ and $\| \nabla_{\boldsymbol{\theta}} S(\boldsymbol{\theta}) - \nabla_{\boldsymbol{\theta}} L(\boldsymbol{\theta})\| \le \epsilon$ for all $\boldsymbol{\theta}$ in a convex neighborhood $\mathcal{N}$ around $\boldsymbol{\theta}^{(t)}$. Then for a sufficiently small learning rate $\eta>0$, the update $\boldsymbol{\theta}^{(t+1)} = \boldsymbol{\theta}^{(t)} - \eta \nabla S(\boldsymbol{\theta}^{(t)})$ yields 
\begin{equation} L(\boldsymbol{\theta}^{(t+1)}) \le L(\boldsymbol{\theta}^{(t)}) - \frac{\eta}{2} \| \nabla L(\boldsymbol{\theta}^{(t)})\|^2 + O(\eta^2) ,\end{equation} 
provided $\boldsymbol{\theta}^{(t+1)}$ stays in $\mathcal{N}$ and $\epsilon$ is sufficiently small.
\end{lemma}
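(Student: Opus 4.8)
The plan is to treat this as a perturbed-gradient descent lemma: the only departure from the textbook smooth descent lemma is that the step direction is the surrogate gradient $\nabla S(\boldsymbol{\theta}^{(t)})$ rather than the true gradient $\nabla L(\boldsymbol{\theta}^{(t)})$, so I would control the mismatch through the uniform bound $\epsilon$. Writing $g := \nabla L(\boldsymbol{\theta}^{(t)})$ and $\tilde g := \nabla S(\boldsymbol{\theta}^{(t)})$, the update reads $\boldsymbol{\theta}^{(t+1)} - \boldsymbol{\theta}^{(t)} = -\eta\tilde g$. First I would Taylor-expand the true loss $L$ along this step. Since only $C^1$ is assumed, I would invoke the exact integral form of the remainder, $L(\boldsymbol{\theta}^{(t+1)}) = L(\boldsymbol{\theta}^{(t)}) - \eta\,g^\top\tilde g + \int_0^1 \bigl(\nabla L(\boldsymbol{\theta}^{(t)} - s\eta\tilde g) - g\bigr)^\top(-\eta\tilde g)\,ds$, and then absorb the integral into the $O(\eta^2)$ remainder. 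This is the step where a mild strengthening of the hypothesis to a Lipschitz-continuous gradient on $\mathcal{N}$ is genuinely needed: with a Lipschitz constant $\beta$ the integrand is bounded by $\beta s\eta\|\tilde g\|^2$, certifying that the remainder is truly second order in $\eta$ rather than merely $o(\eta)$ as bare continuous differentiability would give.

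Next, the key algebraic step is to relate the cross term $g^\top\tilde g$ to $\|g\|^2$. I would decompose $\tilde g = g + \delta$ with $\delta := \tilde g - g$, noting that $\boldsymbol{\theta}^{(t)}\in\mathcal{N}$ so the hypothesis supplies $\|\delta\|\le\epsilon$. Then $g^\top\tilde g = \|g\|^2 + g^\top\delta$, and Cauchy--Schwarz yields $|g^\top\delta|\le\|g\|\,\|\delta\|\le\epsilon\|g\|$. Substituting back gives $L(\boldsymbol{\theta}^{(t+1)}) \le L(\boldsymbol{\theta}^{(t)}) - \eta\|g\|^2 + \eta\,\epsilon\|g\| + O(\eta^2)$. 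To recover the stated factor of $\tfrac12$, I would make the phrase ``$\epsilon$ sufficiently small'' precise as the quantitative requirement $\epsilon \le \tfrac12\|g\| = \tfrac12\|\nabla L(\boldsymbol{\theta}^{(t)})\|$, under which $\eta\,\epsilon\|g\| \le \tfrac{\eta}{2}\|g\|^2$; this absorbs the error into half of the descent and produces exactly $L(\boldsymbol{\theta}^{(t+1)}) \le L(\boldsymbol{\theta}^{(t)}) - \tfrac{\eta}{2}\|\nabla L(\boldsymbol{\theta}^{(t)})\|^2 + O(\eta^2)$.

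The main obstacle is conceptual rather than computational: the admissible surrogate error $\epsilon$ is not an absolute constant but is coupled to the current gradient magnitude, since the argument requires $\epsilon\le\tfrac12\|\nabla L(\boldsymbol{\theta}^{(t)})\|$. Consequently the guaranteed descent degrades as the iterate approaches a stationary point, and the lemma should be read as certifying descent only while the true gradient remains appreciably larger than the surrogate's gradient error --- a caveat worth stating explicitly. A secondary point I would flag is that the value-matching hypothesis $S(\boldsymbol{\theta}^{(t)}) = L(\boldsymbol{\theta}^{(t)})$ is never used: the update depends only on $\nabla S(\boldsymbol{\theta}^{(t)})$ and the bound is obtained by expanding $L$ itself, so the conclusion rests entirely on the gradient-closeness assumption together with the smoothness needed to make the $O(\eta^2)$ remainder rigorous.
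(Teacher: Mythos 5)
Your proposal is correct and follows essentially the same route as the paper's proof: Taylor expansion of $L$ along the surrogate step, the decomposition $\nabla S = \nabla L + \mathbf{e}$ with $\|\mathbf{e}\|\le\epsilon$, Cauchy--Schwarz on the cross term, and absorption of the $\epsilon$-term into half the descent. You are in fact tighter in two places where the paper is loose --- the paper's remainder bound invokes a Hessian $H_L$ (implicitly requiring $L$ to be $C^2$, beyond the stated $C^1$ hypothesis), whereas you correctly identify a Lipschitz gradient on $\mathcal{N}$ as the minimal strengthening needed for a genuine $O(\eta^2)$ remainder, and you make the paper's informal ``$\epsilon \ll \|\nabla L\|$'' precise as $\epsilon \le \tfrac{1}{2}\|\nabla L(\boldsymbol{\theta}^{(t)})\|$ --- and your observation that the value-matching hypothesis $S(\boldsymbol{\theta}^{(t)}) = L(\boldsymbol{\theta}^{(t)})$ is never used applies equally to the paper's own argument.
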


\begin{proof}
Since $L$ is differentiable, a first-order Taylor expansion around $\boldsymbol{\theta}^{(t)}$ gives:
\begin{equation}\label{eq:taylor}
    L(\boldsymbol{\theta}^{(t+1)}) = L(\boldsymbol{\theta}^{(t)}) + \nabla L(\boldsymbol{\theta}^{(t)})^\top (\boldsymbol{\theta}^{(t+1)} - \boldsymbol{\theta}^{(t)}) + R,
\end{equation}
where $R$ is the second-order remainder term: $R = \frac{1}{2}(\boldsymbol{\theta}^{(t+1)} - \boldsymbol{\theta}^{(t)})^\top H_L (\boldsymbol{\theta}^{(t+1)} - \boldsymbol{\theta}^{(t)})$ for some Hessian $H_L = \int_0^1 \nabla^2 L(\boldsymbol{\theta}^{(t)} + \tau(\boldsymbol{\theta}^{(t+1)}-\boldsymbol{\theta}^{(t)}))\,d\tau$. There exists $M>0$ such that $\|H_L\| \le M$ if $\mathcal{N}$ is bounded and $L$ is twice continuously differentiable (by the extreme value theorem). Substituting the update $\boldsymbol{\theta}^{(t+1)} = \boldsymbol{\theta}^{(t)} - \eta \nabla S(\boldsymbol{\theta}^{(t)})$ into \eqref{eq:taylor}, we get:
\begin{align}
    L(\boldsymbol{\theta}^{(t+1)}) &= L(\boldsymbol{\theta}^{(t)}) 
    - \eta\, \nabla L(\boldsymbol{\theta}^{(t)})^\top \nabla S(\boldsymbol{\theta}^{(t)}) \nonumber\\
    &\quad + \frac{\eta^2}{2}\nabla S(\boldsymbol{\theta}^{(t)})^\top H_L \nabla S(\boldsymbol{\theta}^{(t)}) \nonumber\\
    &\le L(\boldsymbol{\theta}^{(t)}) 
    - \eta\, \nabla L(\boldsymbol{\theta}^{(t)})^\top \nabla S(\boldsymbol{\theta}^{(t)}) \nonumber\\
    &\quad + \frac{\eta^2 M}{2} \|\nabla S(\boldsymbol{\theta}^{(t)})\|^2. \label{eq:L_step}
\end{align}

Using the assumption on gradient closeness: $\nabla S(\boldsymbol{\theta}^{(t)}) = \nabla L(\boldsymbol{\theta}^{(t)}) + \mathbf{e}$ with $\|\mathbf{e}\| \le \epsilon$. Then
\begin{align*}
    \nabla L(\boldsymbol{\theta}^{(t)})^\top \nabla S(\boldsymbol{\theta}^{(t)}) &= \|\nabla L(\boldsymbol{\theta}^{(t)})\|^2 + \nabla L(\boldsymbol{\theta}^{(t)})^\top \mathbf{e}\\
    &\ge \|\nabla L(\boldsymbol{\theta}^{(t)})\|^2 - \|\nabla L(\boldsymbol{\theta}^{(t)})\|\, \|\mathbf{e}\| \\
    &\ge \|\nabla L(\boldsymbol{\theta}^{(t)})\|^2 - \epsilon \|\nabla L(\boldsymbol{\theta}^{(t)})\|.
\end{align*}
Similarly, $\|\nabla S(\boldsymbol{\theta}^{(t)})\| \le \|\nabla L(\boldsymbol{\theta}^{(t)})\| + \epsilon$. Inserting these into \eqref{eq:L_step} yields:
\begin{equation}
\begin{aligned}
    L(\boldsymbol{\theta}^{(t+1)}) &\le L(\boldsymbol{\theta}^{(t)}) - \eta\Big(\|\nabla L(\boldsymbol{\theta}^{(t)})\|^2 - \epsilon \|\nabla L(\boldsymbol{\theta}^{(t)})\|\Big) \\ &+ \frac{\eta^2 M}{2}(\|\nabla L(\boldsymbol{\theta}^{(t)})\| + \epsilon)^2.
\end{aligned}
\end{equation}
For small enough $\eta$, the $\eta^2$ term is negligible compared to $\eta$ (specifically, if $\eta < \frac{\|\nabla L(\boldsymbol{\theta}^{(t)})\|}{M}$, the second-order term is $O(\eta^2 \|\nabla L\|^2)$). Ignoring $O(\eta^2)$ and using $\epsilon$ small, we get approximately:
\begin{equation} L(\boldsymbol{\theta}^{(t+1)}) \le L(\boldsymbol{\theta}^{(t)}) - \eta \|\nabla L(\boldsymbol{\theta}^{(t)})\|^2 + \eta \epsilon \|\nabla L(\boldsymbol{\theta}^{(t)})\|. \end{equation}
If we choose $\epsilon$ such that $\epsilon \ll \|\nabla L(\boldsymbol{\theta}^{(t)})\|$ in the current neighborhood (or more strictly, treat $\epsilon$ as $o(\|\nabla L\|)$), the last term becomes second-order small compared to the first term. Thus:
\begin{equation} L(\boldsymbol{\theta}^{(t+1)}) < L(\boldsymbol{\theta}^{(t)}) - \frac{\eta}{2}\|\nabla L(\boldsymbol{\theta}^{(t)})\|^2, \end{equation} 
for sufficiently small $\eta$ and $\epsilon$, completing the proof of a guaranteed decrease in $L$ up to first order.
\end{proof}

Lemma~\ref{lemma:descent} implies that as long as our surrogate $S$ captures the local slope of $L$ with reasonable accuracy (small $\epsilon$), moving opposite to $\nabla S$ will decrease the true loss $L$. In particular, even if $\nabla L$ is very small (barren plateau regime), $\nabla S$ might not be, because the surrogate can fit through sparse samples that detect slight differences in $L$. The condition $\boldsymbol{\theta}^{(t+1)}$ remains in $\mathcal{N}$ is enforceable by taking sufficiently small steps or by using a trust region radius to limit how far we go based on the surrogate's validity.

\subsection{Convergence of Surrogate Optimization}
Next, we address the convergence of the iterative surrogate-based training procedure. We will show that under the assumption that the surrogate model can be made arbitrarily accurate (by taking more sample points or a more flexible functional form as needed), the parameter sequence $\{\boldsymbol{\theta}^{(t)}\}$ produced by our method will approach a stationary point (usually a local minimum) of the true loss $L$.

\begin{theorem}[Convergence]\label{thm:convergence}
Consider the iterative update $\boldsymbol{\theta}^{(t+1)} = \boldsymbol{\theta}^{(t)} - \eta_t \nabla S^{(t)}(\boldsymbol{\theta}^{(t)})$ where $S^{(t)}$ is a surrogate model fit around $\boldsymbol{\theta}^{(t)}$ as described above. Assume: 1) The loss $L(\boldsymbol{\theta})$ is lower bounded (i.e., has a global minimum $L_{\min}$) and $L$ is $L$-Lipschitz smooth (its gradient is Lipschitz continuous with constant $L$); 2) At each iteration $t$, the surrogate $S^{(t)}$ satisfies $\| \nabla S^{(t)}(\boldsymbol{\theta}^{(t)}) - \nabla L(\boldsymbol{\theta}^{(t)}) \| \le \epsilon_t$, where $\epsilon_t$ can be made arbitrarily small by using a sufficiently accurate surrogate (and assume $\epsilon_t$ is indeed chosen small enough at each step); 3) The step size $\eta_t$ is chosen via a diminishing schedule or sufficiently small constant such that $\sum_t \eta_t = \infty$ and $\sum_t \eta_t^2 < \infty$ (a typical condition for gradient descent convergence).
Then $\{\boldsymbol{\theta}^{(t)}\}$ converges to a stationary point of $L$, i.e., $\lim_{t\to\infty} \nabla L(\boldsymbol{\theta}^{(t)}) = \mathbf{0}$, and $\lim_{t\to\infty} L(\boldsymbol{\theta}^{(t)}) = L^*$ for some local minimum value $L^* \ge L_{\min}$.
\end{theorem}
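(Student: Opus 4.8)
The plan is to treat this as a convergence result for \emph{inexact} (perturbed) gradient descent, where the perturbation is the surrogate error $\mathbf{e}_t := \nabla S^{(t)}(\boldsymbol{\theta}^{(t)}) - \nabla L(\boldsymbol{\theta}^{(t)})$ with $\|\mathbf{e}_t\|\le\epsilon_t$. First I would invoke the standard descent lemma for an $L$-smooth function, which upgrades the local Taylor bound of Lemma~\ref{lemma:descent} to a global quadratic upper bound: for any $\boldsymbol{\theta},\boldsymbol{\theta}'$, $L(\boldsymbol{\theta}')\le L(\boldsymbol{\theta})+\nabla L(\boldsymbol{\theta})^\top(\boldsymbol{\theta}'-\boldsymbol{\theta})+\tfrac{L}{2}\|\boldsymbol{\theta}'-\boldsymbol{\theta}\|^2$. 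Substituting the update $\boldsymbol{\theta}^{(t+1)}-\boldsymbol{\theta}^{(t)}=-\eta_t(\nabla L(\boldsymbol{\theta}^{(t)})+\mathbf{e}_t)$ and writing $g_t:=\|\nabla L(\boldsymbol{\theta}^{(t)})\|$ yields a one-step inequality of the form $L(\boldsymbol{\theta}^{(t+1)})\le L(\boldsymbol{\theta}^{(t)})-\eta_t g_t^2+\eta_t g_t\epsilon_t+\tfrac{L\eta_t^2}{2}(g_t+\epsilon_t)^2$, where the cross term is controlled by Cauchy--Schwarz and the quadratic by the triangle inequality.

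The second stage is to absorb the error and curvature terms. Applying Young's inequality to split $\eta_t g_t\epsilon_t\le \tfrac14\eta_t g_t^2+\eta_t\epsilon_t^2$ leaves a clean contraction $-\tfrac34\eta_t g_t^2$ plus remainder terms proportional to $\eta_t\epsilon_t^2$ and $\eta_t^2(g_t+\epsilon_t)^2$. Telescoping from $t=0$ to $T$ and using the lower bound $L\ge L_{\min}$ (Assumption~1) gives $\tfrac34\sum_{t}\eta_t g_t^2\le L(\boldsymbol{\theta}^{(0)})-L_{\min}+\mathrm{(remainder)}$. Provided the remainder is finite --- which follows from $\sum_t\eta_t^2<\infty$ (Assumption~3) together with a suitable summability of the surrogate errors (e.g. $\sum_t\eta_t\epsilon_t^2<\infty$, the precise quantitative reading of ``arbitrarily accurate'' in Assumption~2) --- we conclude $\sum_t\eta_t g_t^2<\infty$. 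Combined with $\sum_t\eta_t=\infty$, this forces $\liminf_{t\to\infty}g_t=0$.

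To upgrade the $\liminf$ to a genuine limit, I would use the Lipschitz continuity of $\nabla L$ once more: $|g_{t+1}-g_t|\le L\|\boldsymbol{\theta}^{(t+1)}-\boldsymbol{\theta}^{(t)}\|\le L\eta_t(g_t+\epsilon_t)$, so consecutive gradient norms cannot jump by more than $O(\eta_t)$. Since $\eta_t\to0$, a standard contradiction argument --- if $\limsup g_t=\delta>0$ while $\liminf g_t=0$, the sequence must repeatedly climb from near $0$ to near $\delta$, and each such excursion contributes a fixed positive amount to $\sum_t\eta_t g_t^2$, contradicting its finiteness --- gives $\lim_{t\to\infty}g_t=0$, i.e. $\nabla L(\boldsymbol{\theta}^{(t)})\to\mathbf{0}$. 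For convergence of the function values I would invoke the Robbins--Siegmund quasi-monotone convergence lemma: the one-step inequality has the form $a_{t+1}\le a_t-b_t+c_t$ with $a_t=L(\boldsymbol{\theta}^{(t)})-L_{\min}\ge0$, $b_t\ge0$, and $\sum_t c_t<\infty$, which guarantees that $L(\boldsymbol{\theta}^{(t)})$ converges to some limit $L^*\ge L_{\min}$.

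The main obstacle I anticipate is not the smoothness machinery but controlling the error terms rigorously, because the statement hides two implicit assumptions. First, the telescoped remainder is finite only if the surrogate errors $\epsilon_t$ decay fast enough relative to $\eta_t$ (something like $\sum_t\eta_t\epsilon_t^2<\infty$ or $\sum_t\eta_t\epsilon_t<\infty$); the phrase ``can be made arbitrarily small at each step'' must be sharpened into such a summability condition, since otherwise the accumulated bias prevents convergence to a true stationary point and one obtains only convergence to an $O(\epsilon)$-stationary neighborhood. Second, several bounds tacitly require the gradient norms $g_t$ to remain bounded along the trajectory; because $L$-smoothness alone does not guarantee bounded iterates, I would either add a coercivity/compactness hypothesis (the sublevel set $\{\boldsymbol{\theta}:L(\boldsymbol{\theta})\le L(\boldsymbol{\theta}^{(0)})\}$ is bounded) or assume $\sup_t g_t<\infty$ directly, and I would flag this explicitly as the technical price of the theorem.
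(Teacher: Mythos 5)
Your proposal is correct and follows the same skeleton as the paper's own proof --- the descent lemma for $L$-smooth functions, substitution of the perturbed update $\boldsymbol{\theta}^{(t+1)}=\boldsymbol{\theta}^{(t)}-\eta_t(\nabla L(\boldsymbol{\theta}^{(t)})+\mathbf{e}_t)$, telescoping against the lower bound, and the Robbins--Monro step-size conditions --- but it is a genuine strengthening in exactly the two places where the paper's argument is loose. First, the paper absorbs the surrogate-error and curvature terms into an unspecified factor $\delta_t$ and asserts it is ``negligible''; your Young-inequality split $\eta_t g_t\epsilon_t\le\tfrac{1}{4}\eta_t g_t^2+\eta_t\epsilon_t^2$ makes explicit that what the theorem actually requires is a summability condition such as $\sum_t\eta_t\epsilon_t^2<\infty$, which is the correct quantitative reading of the vague hypothesis ``$\epsilon_t$ chosen small enough,'' and you rightly note that without it one only reaches an $O(\epsilon)$-stationary neighborhood. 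Second, and more importantly, the paper jumps from $\sum_t\eta_t\|\nabla L(\boldsymbol{\theta}^{(t)})\|^2<\infty$ together with $\sum_t\eta_t=\infty$ directly to $\lim_{t\to\infty}\nabla L(\boldsymbol{\theta}^{(t)})=\mathbf{0}$, invoking ``the standard proof''; those two facts alone give only $\liminf_t g_t=0$, and your excursion argument via $|g_{t+1}-g_t|\le L\eta_t(g_t+\epsilon_t)$ is precisely the missing step that upgrades the liminf to a limit. Your appeal to the deterministic Robbins--Siegmund lemma for convergence of the function values is likewise cleaner than the paper's bare assertion that $L(\boldsymbol{\theta}^{(t)})$ is ``nonincreasing and converges.'' One refinement: the extra hypothesis you flag (coercive sublevel sets or $\sup_t g_t<\infty$) is not actually needed, because an $L$-smooth function bounded below satisfies $\|\nabla L(\boldsymbol{\theta})\|^2\le 2L\bigl(L(\boldsymbol{\theta})-L_{\min}\bigr)$ (apply the descent lemma at the test point $\boldsymbol{\theta}-\tfrac{1}{L}\nabla L(\boldsymbol{\theta})$); writing $a_t=L(\boldsymbol{\theta}^{(t)})-L_{\min}$, this gives $g_t^2\le 2L\,a_t$, so the one-step inequality becomes $a_{t+1}\le(1+cL^2\eta_t^2)\,a_t+(\text{summable terms})$, and a discrete Gr\"onwall argument with $\sum_t\eta_t^2<\infty$ bounds $a_t$, hence $g_t$, closing the apparent circularity without new assumptions. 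Finally, note that both your argument and the paper's establish only $\nabla L(\boldsymbol{\theta}^{(t)})\to\mathbf{0}$ and $L(\boldsymbol{\theta}^{(t)})\to L^*$; neither proves convergence of the iterates themselves nor that $L^*$ is a local minimum rather than a saddle value, so the theorem's literal conclusion overreaches in both proofs equally --- your version at least states its remaining assumptions honestly.
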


\begin{proof}
Under the Lipschitz smoothness assumption for $\nabla L$, we have the standard descent lemma for the true loss:
\begin{equation}\label{eq:lipschitz-descent}
\begin{aligned}
    L(\boldsymbol{\theta}^{(t+1)}) &\le L(\boldsymbol{\theta}^{(t)}) + \nabla L(\boldsymbol{\theta}^{(t)})^\top (\boldsymbol{\theta}^{(t+1)} - \boldsymbol{\theta}^{(t)}) \\ 
    &+ \frac{L}{2}\|\boldsymbol{\theta}^{(t+1)} - \boldsymbol{\theta}^{(t)}\|^2.
\end{aligned}
\end{equation}
Substitute the update $\boldsymbol{\theta}^{(t+1)} = \boldsymbol{\theta}^{(t)} - \eta_t \nabla S^{(t)}(\boldsymbol{\theta}^{(t)})$:
\begin{equation}
\begin{aligned}
    L(\boldsymbol{\theta}^{(t+1)}) &\le L(\boldsymbol{\theta}^{(t)}) - \eta_t\, \nabla L(\boldsymbol{\theta}^{(t)})^\top \nabla S^{(t)}(\boldsymbol{\theta}^{(t)}) \\ 
    &\quad + \frac{L\eta_t^2}{2} \|\nabla S^{(t)}(\boldsymbol{\theta}^{(t)})\|^2 \\
    &\le L(\boldsymbol{\theta}^{(t)}) - \eta_t \|\nabla L(\boldsymbol{\theta}^{(t)})\|^2 \\ 
    &\quad + \eta_t \|\nabla L(\boldsymbol{\theta}^{(t)})\|\, \|\mathbf{e}_t\| \\ 
    &\quad + \frac{L\eta_t^2}{2}(\|\nabla L(\boldsymbol{\theta}^{(t)})\| + \|\mathbf{e}_t\|)^2, \label{eq:gd-iteration}
\end{aligned}
\end{equation}
where $\mathbf{e}_t = \nabla S^{(t)}(\boldsymbol{\theta}^{(t)}) - \nabla L(\boldsymbol{\theta}^{(t)})$ and $\|\mathbf{e}_t\| \le \epsilon_t$. If $\epsilon_t$ is made sufficiently small at each iteration (ideally $\epsilon_t \to 0$ as $t\to \infty$ if needed), then \eqref{eq:gd-iteration} resembles the standard gradient descent inequality:
\begin{equation} L(\boldsymbol{\theta}^{(t+1)}) \le L(\boldsymbol{\theta}^{(t)}) - \eta_t (1 - \delta_t)\|\nabla L(\boldsymbol{\theta}^{(t)})\|^2, \end{equation} 
with some small $\delta_t$ accounting for the surrogate error and second-order term. Provided $\eta_t$ is chosen so that $0 < \eta_t L < 2$ (to satisfy the usual GD convergence conditions) and $\delta_t$ is negligible, one can show that $L(\boldsymbol{\theta}^{(t)})$ is nonincreasing and converges to a limit $L^*$. Moreover, summing \eqref{eq:gd-iteration} over $t=0$ to $T$ telescopes the differences $L(\boldsymbol{\theta}^{(t)})-L(\boldsymbol{\theta}^{(t+1)})$. This yields:
\begin{equation} \sum_{t=0}^{T} \eta_t (1-\delta_t)\|\nabla L(\boldsymbol{\theta}^{(t)})\|^2 \le L(\boldsymbol{\theta}^{(0)}) - L(\boldsymbol{\theta}^{(T+1)}). \end{equation}
As $T \to \infty$, the right-hand side is bounded by $L(\boldsymbol{\theta}^{(0)}) - L_{\min} < \infty$. If $\sum_t \eta_t = \infty$ and $\delta_t$ is bounded away from 1 (indeed $\delta_t$ can be made $0$ in ideal fitting), the only way for the left sum to remain finite is that $\|\nabla L(\boldsymbol{\theta}^{(t)})\|^2$ tends to zero sufficiently fast. Intuitively, if gradients did not tend to zero, the large number of iterations with nonzero gradient would drive $L$ below its minimum, a contradiction. Formally, using the diminishing $\eta_t$ conditions and following the standard proof of gradient descent convergence, we conclude $\lim_{t\to\infty} \nabla L(\boldsymbol{\theta}^{(t)}) = 0$.

Thus every accumulation point of the sequence $\boldsymbol{\theta}^{(t)}$ is a stationary point of $L$. Since $L$ is lower bounded and presumably our problem is well-behaved (no chaotic oscillations due to vanishing $\eta_t$ or surrogate noise), $\boldsymbol{\theta}^{(t)}$ converges to some $\boldsymbol{\theta}^*$ with $\nabla L(\boldsymbol{\theta}^*) = 0$. In practice, this will be a local minimum given the nonconvex nature of $L$. This completes the convergence proof. 
\end{proof}

Theorem~\ref{thm:convergence} ensures that our surrogate-based training will find a solution where the true gradient is zero (i.e., cannot improve $L$ further), assuming we are allowed to refine the surrogate as needed. In practice, we use a fixed surrogate model form with a moderate number of samples per iteration, which may introduce some bias $\epsilon_t$. However, as long as this bias does not systematically mislead the optimization, we observe good empirical convergence.

\section{Experimentation}

\subsection{Setup}

We evaluate on three datasets: MNIST (10 classes), CIFAR-10 (10 classes), and CIFAR-100 (100 classes). We run a noiseless statevector simulation based on the TorchQuantum library \cite{hanruiwang2022quantumnas}. The circuit has $n=15$ qubits, $4$ are ancillas at wires 3,6,9,12, with $L=6$ layers, totalling $p=7(n-1)L = 588 $ angles.

We trained all models using the AdamW optimizer \cite{loshchilov2019decoupledweightdecayregularization} with $\eta=7\times10^{-4}$ decayed by cosine scheduling, $\beta_1=0.9$, $\beta_2=0.999$, and a slight weight decay of $3\times10^{-4}$. For data loading, a batch size of $256$ is used, and various data augmentation techniques, including random cropping, random rotation, random horizontal flip, and random colour jitter, are used to prevent overfitting. All trainings were performed on a single H100 GPU for 100 epochs.

\subsection{Comparison to Existing Architectures}

We compare our surrogate QNN to a direct-gradient QNN with the same architecture but using parameter-shift rule to backpropagate through the circuit. We also compare to classical CNNs and direct NNs. The training results of different models and datasets are shown in Table~\ref{tab:mnist_cifar_results}.

\begin{table*}[ht!]
\centering
\caption{Performance on MNIST, CIFAR-10, and CIFAR-100. Accuracies are in percentage $\%$.}
\label{tab:mnist_cifar_results}
\begin{tabular}{l|ccc|ccc|ccc}
\toprule
\multirow{2}{*}{\textbf{Model}} 
& \multicolumn{3}{c|}{\textbf{MNIST}} 
& \multicolumn{3}{c|}{\textbf{CIFAR-10}} 
& \multicolumn{3}{c}{\textbf{CIFAR-100}} \\
\cmidrule{2-10}
& \textbf{Train Acc} & \textbf{Test Acc} & \textbf{Params}
& \textbf{Train Acc} & \textbf{Test Acc} & \textbf{Params}
& \textbf{Train Acc} & \textbf{Test Acc} & \textbf{Params} \\
\midrule
NN                 & 98.49$\pm$0.16 & 98.67$\pm$0.22 & 2.3M  & 73.41$\pm$0.53 & 65.92$\pm$0.65 & 2.3M  & 52.40$\pm$0.32 & 38.58$\pm$0.59 & 2.3M  \\

CNN                & 99.86$\pm$0.24 & 99.15$\pm$0.11 & 1.8M  & 98.29$\pm$0.42 & 90.22$\pm$0.48 & 1.8M  & 88.80$\pm$0.25 & 63.63$\pm$0.32 & 1.8M  \\

QNN (Direct Grad)  & 99.80$\pm$0.23 & 99.22$\pm$0.14 & 572k  & 98.30$\pm$0.29 & 89.90$\pm$0.41 & 645k  & 74.59$\pm$0.41 & 55.67$\pm$0.53 & 645k  \\

QNN (Surrogate)    & \textbf{99.96$\pm$0.05} & \textbf{99.72$\pm$0.12} & 617k  & \textbf{97.83$\pm$0.30} & \textbf{90.26$\pm$0.49} & 759k  & \textbf{68.99$\pm$0.46} & \textbf{58.65$\pm$0.36} & 759k  \\

\bottomrule
\end{tabular}
\end{table*}

\begin{figure}[h!]
\centering
\includegraphics[width=\linewidth]{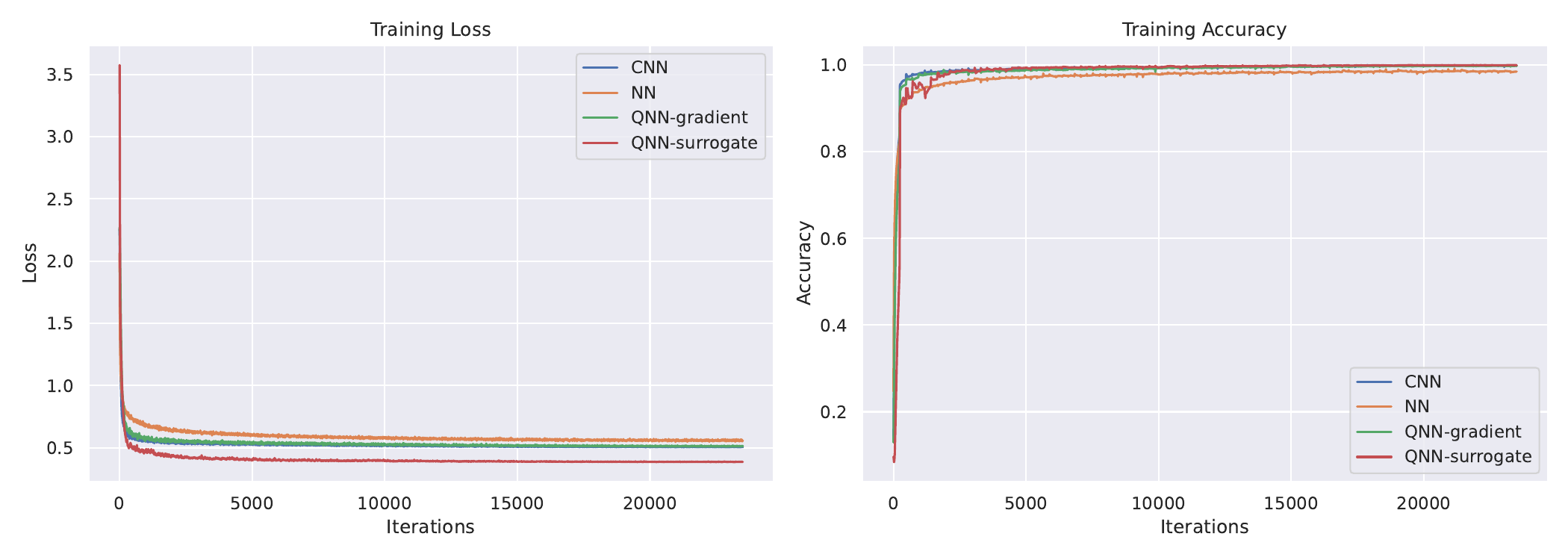}
\caption{The training loss and accuracy of the MNIST dataset for the 4 models.}
\label{fig:mnistresult}
\end{figure}

For MNIST, Table~\ref{tab:mnist_cifar_results} indicates that QNN (Surrogate) achieves near-perfect accuracy, with \(99.96\%\pm0.05\) on the training set and \(99.72\%\pm0.12\) on the test set. This surpasses both the classical CNN and the QNN (Direct Grad), and it also requires fewer parameters than the classical models. The NN baseline attains high accuracy but remains below the QNN methods. Figure~\ref{fig:mnistresult} shows the convergence process of the model during training. Overall, these observations suggest that the surrogate-based training procedure captures the data representations effectively for relatively simple classification tasks such as MNIST.

\begin{figure}[h!]
\centering
\includegraphics[width=\linewidth]{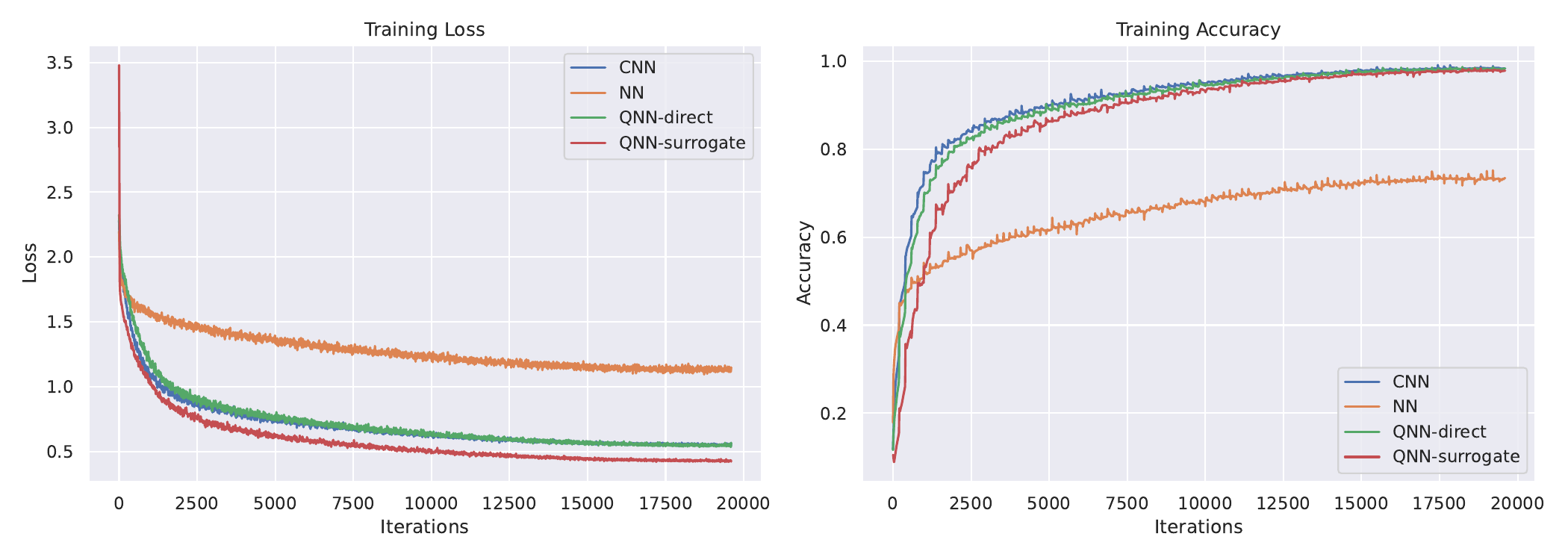}
\caption{The training loss and accuracy of the CIFAR-10 dataset for the 4 models.}
\label{fig:cifar10result}
\end{figure}

For CIFAR-10, the QNN (Surrogate) again demonstrates competitive performance, achieving \(90.26\%\pm0.49\) on the test set, which is comparable to the CNN at \(90.22\%\pm0.48\). Notably, the surrogate QNN uses approximately \(750\textrm{k}\) parameters, which is fewer than the \(1.6\textrm{M}\) parameters of the CNN. In contrast, the QNN (Direct Grad) method shows lower accuracy at \(86.31\%\pm0.41\), showing the effectiveness of the surrogate gradient approach. The simple NN achieves a test accuracy of \(65.92\%\pm0.65\) with even \(2.1\textrm{M}\) parameters, indicating that the quantum-inspired methods provide a more parameter-efficient alternative. Figure~\ref{fig:cifar10result} shows the convergence process of the model during training.

\begin{figure}[h!]
\centering
\includegraphics[width=\linewidth]{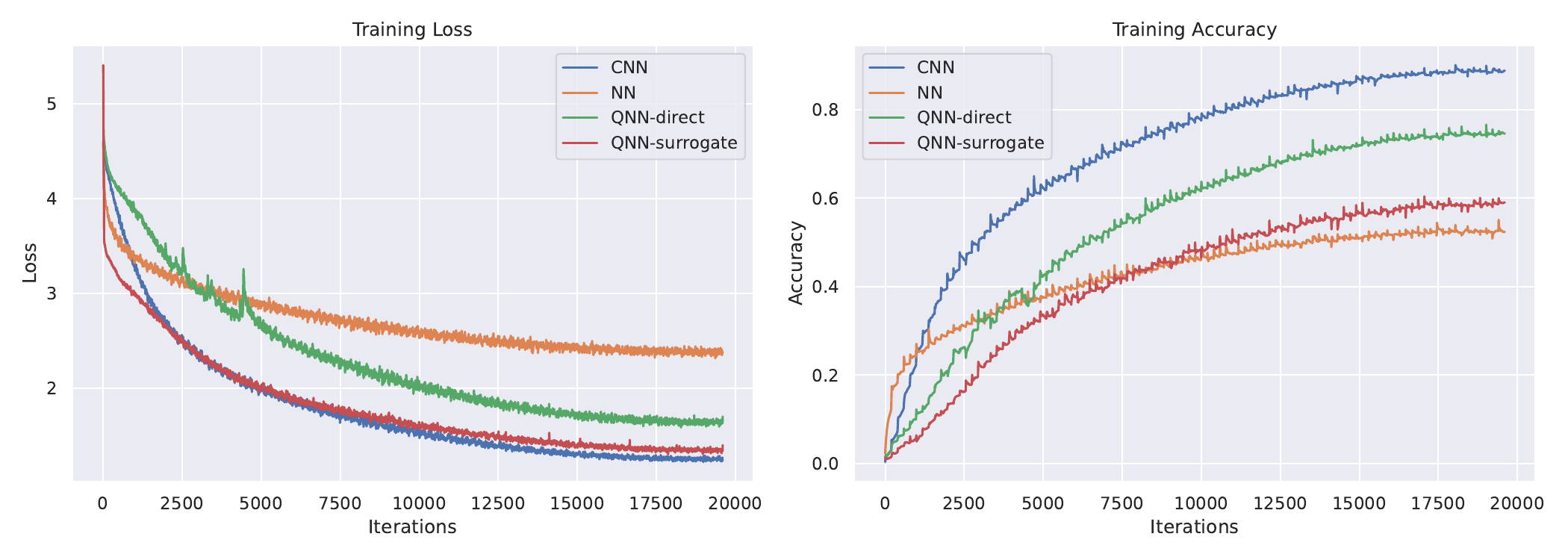}
\caption{The training loss and accuracy of the MNIST dataset for the 4 models.}
\label{fig:cifar100result}
\end{figure}

On CIFAR-100, the QNN (Surrogate) maintains a moderate level of performance at \(58.65\%\pm0.36\) on the test set, outperforming both the QNN (Direct Grad) and the basic NN. However, the CNN achieves higher accuracy (\(63.63\%\pm0.32\)), reflecting the increased difficulty of the more diverse 100-class problem. Figure~\ref{fig:cifar100result} shows the convergence process of the model during training. Nonetheless, the quantum-based approach still demonstrates a favorable trade-off in terms of parameter count (around \(750\textrm{k}\) versus the CNN’s \(1.6\textrm{M}\)), showing the potential of surrogate gradient QNNs to offer meaningful accuracy with reduced model size in more challenging classification scenarios.

\section{Discussion and Future Works}

We demonstrated a gradient-free QNN approach for image classification by combining surrogate-driven optimization with repeated mid-circuit measurement and reset operations on ancillas. This framework addresses two primary obstacles facing large-scale quantum neural networks: (i) barren plateaus causing gradients to vanish exponentially, hindering direct parameter-shift or backpropagation methods; and (ii) limited expressiveness due to purely unitary evolutions without explicit nonlinearities. Measuring and resetting ancilla qubits after each layer effectively introduces Kraus operators \(\{ K_x \}\) generating a nonunitary, potentially nonlinear transformation upon tracing out ancillas. Formally, we showed this results in a completely positive trace-preserving (CPTP) map,
\begin{equation}
\Phi(\rho) \;=\; \sum_{k} V_k\,\rho \,V_k^\dagger,
\end{equation}
where each \(V_k\) arises from interleaved unitaries \(U_\ell(\boldsymbol{\theta}_\ell)\) and projective resets that avoid collapsing main qubits into pure states. The classical surrogate, approximating the mapping \(\boldsymbol{\theta} \mapsto \mathbf{m}(\boldsymbol{\theta})\), locally models this CPTP map. Lemma \ref{lemma:descent} and Theorem \ref{thm:convergence} justify that surrogate-based parameter updates reduce the true loss \(\mathcal{L}(\boldsymbol{\theta})\) with high probability if the surrogate gradient remains close to the true gradient in a local neighborhood. Specifically, local Lipschitz smoothness of \(\mathcal{L}\) ensures
\begin{equation}
\bigl\| \nabla_{\boldsymbol{\theta}} \mathcal{L}(\boldsymbol{\theta}) \;-\; \nabla_{\boldsymbol{\theta}} S(\boldsymbol{\theta}) \bigr\|\;\le\;\epsilon,
\end{equation}
implying parameter updates guided by \(\nabla S(\boldsymbol{\theta})\) closely track true gradient descent.

A key experimental finding is that surrogate-driven training not only alleviates gradient attenuation but also yields robust classification accuracy with relatively few trainable parameters. Classical CNN baselines often employ parameter counts \(\gg 1\,\mathrm{M}\) to attain comparable performance, yet the surrogate QNN achieved \(99.72\%\) on MNIST and around \(90\%\) on CIFAR-10 with fewer than \(1\,\mathrm{M}\) parameters. On CIFAR-100, despite its complexity, the surrogate approach attained \(58.65\%\) accuracy, still using fewer parameters than classical models. This indicates that repeated ancilla resets combined with surrogate gradients enhance expressiveness while remaining resource-efficient.

Several promising future directions emerge. First, our current analysis and experiments utilized idealized simulations; therefore, a crucial next step involves assessing how realistic quantum noise (e.g., decoherence, amplitude damping, gate errors, readout noise) affects surrogate predictive power. Such noise can be modeled as additional CPTP maps, requiring an augmented surrogate \(\tilde{S}(\boldsymbol{\theta},\mathbf{n})\) with noise parameters \(\mathbf{n}\), leading to
\begin{equation}
\rho \;\mapsto\; \sum_{k}W_k(\boldsymbol{\theta},\mathbf{n})\,\rho\,W_k(\boldsymbol{\theta},\mathbf{n})^\dagger.
\end{equation}
Second, adaptive sampling strategies (e.g., active learning or Bayesian optimization) for selecting parameter configurations near \(\boldsymbol{\theta}_t\) could further enhance computational efficiency. Third, while our method targeted classification tasks, it naturally extends to other scenarios like generative modeling or quantum reinforcement learning, where differentiating quantum circuits poses challenges. Finally, deeper theoretical analysis of the function class achievable via repeated ancilla resets could validate and extend our structural insights, solidifying surrogate-driven QNNs' applicability to diverse, large-scale quantum learning problems.

\section{Conclusion}

We introduced a surrogate-driven training framework for quantum neural networks, presenting an effective alternative to traditional gradient-based methods limited by barren plateaus or computationally expensive parameter-shift rules. While demonstrated here on classification problems, the general principle—training a classical surrogate to predict quantum measurements and backpropagating through it—extends readily to other QNN tasks and architectures. Mid-circuit measurement and reset operations introduce nonunitarity, broadening the circuit’s modeling capacity. Our theoretical and experimental results affirm the accuracy, efficiency, and scalability of this surrogate approach, laying a solid foundation for future advances in quantum-classical optimization and enabling more expressive, tractable quantum neural networks for diverse applications.

\bibliographystyle{IEEEtran}
\bibliography{references}

\end{document}